\documentclass[11pt]{article}
\usepackage[T1]{fontenc}
\usepackage[utf8]{inputenc}
\usepackage{lmodern}
\usepackage{amsmath,amssymb,amsthm,mathtools}
\usepackage{geometry}
\usepackage{hyperref}
\usepackage[nameinlink,capitalise,noabbrev]{cleveref}
\usepackage{enumitem}
\usepackage{microtype}
\newcommand{\ER}{\exists\mathbb{R}}
\newcommand{\Det}{\operatorname{Det}}

\theoremstyle{plain}
\newtheorem{theorem}{Theorem}[section]
\newtheorem{proposition}[theorem]{Proposition}

\theoremstyle{definition}
\newtheorem{definition}[theorem]{Definition}
\newtheorem{problem}[theorem]{Problem}

\theoremstyle{remark}
\newtheorem{remark}[theorem]{Remark}

\title{\texorpdfstring{$\exists\mathbb{R}$}{ER}-Completeness of Tensor Degeneracy and a Derandomization Barrier for Hyperdeterminants}
\author{Angshul Majumdar}
\date{}

\begin{document}
\maketitle

\begin{abstract}
We study the computational complexity of singularity for multilinear maps. While the determinant characterizes singularity for matrices, its multilinear analogue---the hyperdeterminant---is defined only in boundary format and quickly becomes algebraically unwieldy. We show that the intrinsic notion of tensor singularity, namely degeneracy, is complete for the existential theory of the reals. The reduction is exact and entirely algebraic: homogeneous quadratic feasibility is reduced to projective bilinear feasibility, then to singular matrix-pencil feasibility, and finally encoded directly as tensor degeneracy. No combinatorial gadgets are used.

In boundary format, degeneracy coincides with hyperdeterminant vanishing. We therefore isolate the exact gap between intrinsic tensor singularity and its classical polynomial certificate. We show that deterministic hardness transfer to the hyperdeterminant reduces to selecting a point outside the zero set of a completion polynomial, yielding a structured instance of polynomial identity testing. We further formalize the failure of several natural deterministic embedding strategies. This identifies a sharp frontier: real 3-tensor degeneracy is fully characterized at the level of \(\ER\)-completeness, while the deterministic complexity of hyperdeterminant vanishing remains tied to a derandomization problem in algebraic complexity.
\end{abstract}

\section{Introduction}
\label{sec:intro}

The determinant is the fundamental algebraic certificate of singularity in linear algebra: for a square matrix \(M\), the condition \(\det(M)=0\) is equivalent to the existence of a nonzero vector in the kernel of \(M\). In multilinear algebra, the corresponding situation is subtler. Tensors admit an intrinsic notion of singularity, usually called \emph{degeneracy}, defined by the existence of nonzero vectors whose modewise contractions vanish simultaneously. In boundary format, this condition is captured by the hyperdeterminant, the classical multilinear analogue of the determinant introduced by Cayley and developed systematically by Gelfand, Kapranov, and Zelevinsky~\cite{GKZ94,Zelevinsky1996}. Outside boundary format, however, degeneracy remains meaningful while the hyperdeterminant is unavailable as a single polynomial certificate.

This distinction motivates the central question of the present paper: what is the computational complexity of tensor singularity in its intrinsic form? Our answer is that the right computational primitive is not hyperdeterminant vanishing, but degeneracy itself.

\paragraph{Our contributions.}
We show that tensor degeneracy is the natural computational analogue of matrix singularity in multilinear algebra, and prove that deciding degeneracy of real 3-tensors is \(\ER\)-complete. The reduction is exact and entirely algebraic: homogeneous quadratic feasibility is reduced to projective bilinear feasibility, then to singular matrix-pencil feasibility, and finally to tensor degeneracy, without combinatorial gadgets or relaxation.

We then isolate the exact gap between intrinsic tensor singularity and its classical polynomial certificate. In boundary format, degeneracy is equivalent to hyperdeterminant vanishing, so transferring hardness requires a deterministic boundary-format lifting. We show that every known route to such a lifting reduces to finding a nonzero evaluation point for a completion polynomial, yielding a structured polynomial identity testing problem. Thus tensor degeneracy is characterized at the level of \(\ER\)-completeness, whereas the deterministic complexity of hyperdeterminant vanishing is blocked by a derandomization problem in algebraic complexity.

The class \(\ER\) captures decision problems reducible to the solvability of systems of polynomial equations and inequalities over \(\mathbb{R}\), and is the natural complexity class for problems in real algebraic geometry~\cite{ERCompendium}. Our result places tensor degeneracy alongside the standard semialgebraic feasibility problems that define this class. At the same time, it separates degeneracy from other tensor hardness phenomena. Many tensor problems are known to be NP-hard, including eigenvalue, singular value, and best rank-one approximation problems~\cite{HillarLim}. Tensor rank is also computationally hard in a strong algebraic sense~\cite{SchaeferStefankovicTensor}. However, these problems do not isolate singularity itself. Degeneracy does.

A second theme of the paper concerns the hyperdeterminant. In boundary format, it gives a polynomial certificate for degeneracy, but all known hardness transfers to that setting use randomized completions~\cite{Narayanan2025}. We explain why deterministic lifting is difficult: it requires selecting a point outside the zero set of a completion polynomial attached to the input tensor. This yields a structured polynomial identity testing problem for a family of hyperdeterminant circuits. In this sense, the gap between degeneracy and hyperdeterminant vanishing is not merely geometric; it is a derandomization problem.

This suggests that, in contrast to linear algebra where a single polynomial captures singularity, the multilinear setting separates into an intrinsic geometric condition (degeneracy) and a restricted polynomial certificate (the hyperdeterminant), whose computational connection remains unresolved.

\paragraph{Organization.}
\Cref{sec:related} positions the work relative to tensor complexity and hyperdeterminants. \Cref{sec:bilinear} develops the hardness source through quadratic, bilinear, and singular pencil feasibility. \Cref{sec:main} proves the main theorem that real 3-tensor degeneracy is \(\ER\)-complete. \Cref{sec:boundary} explains boundary format and the relation to hyperdeterminant vanishing. \Cref{sec:pit} isolates the structured polynomial identity testing barrier. \Cref{sec:failures} formalizes the failure of natural deterministic boundary-lifting strategies. \Cref{sec:open} concludes with insights and open problems.

\section{Related Work}
\label{sec:related}

The hyperdeterminant was developed systematically by Gelfand, Kapranov, and Zelevinsky, who showed that in boundary format it defines the dual variety to the Segre embedding and vanishes exactly on degenerate tensors~\cite{GKZ94}. In that sense it is the canonical polynomial analogue of the determinant in multilinear algebra. However, unlike the determinant, it exists only in boundary format and becomes rapidly unwieldy even in modest dimensions~\cite{GKZ94,Zelevinsky1996}. Recent work has shown that testing hyperdeterminant vanishing is NP-hard in dimension four and higher, using randomized boundary-format embeddings~\cite{Narayanan2025}. Our work does not extend that hardness result to \(\ER\)-hardness; rather, it identifies the exact unrestricted multilinear primitive underlying it, namely tensor degeneracy.

On the complexity side, \(\ER\) is now understood as the natural class for semialgebraic feasibility problems. Standard compendia and surveys record the \(\ER\)-completeness of homogeneous quadratic feasibility and related degree-two formulations~\cite{ERCompendium}. This is the algebraic starting point for our reductions. Our use of bilinearization is therefore not a gadget construction but a direct multilinear normal-form reduction.

More broadly, tensors exhibit pervasive hardness phenomena. Eigenvalue, singular value, best rank-one approximation, and many other tensor tasks are NP-hard~\cite{HillarLim}. Tensor rank is also complete for the existential theory of the base field in a strong sense~\cite{SchaeferStefankovicTensor}. Those results, however, do not isolate singularity itself. The contribution of the present paper is to show that the intrinsic geometric notion of degeneracy already has full \(\ER\)-hardness.

A second relevant line of work concerns arithmetic descriptions of hyperdeterminants. Boundary-format hyperdeterminants admit determinantal or circuit representations; recent work of Joux and Narayanan gives explicit arithmetic circuits for families of hyperdeterminants, with quasipolynomial to subexponential complexity depending on the format family~\cite{JouxNarayanan2024}. This is exactly what turns the missing deterministic lifting step into a structured PIT problem rather than merely a geometric completion problem.

Finally, the derandomization issue connects with the general PIT landscape. Kabanets and Impagliazzo showed that deterministic polynomial-time PIT for general arithmetic circuits would imply major Boolean or arithmetic circuit lower bounds~\cite{KabanetsImpagliazzo}. We do not claim such consequences for the hyperdeterminant family considered here; however, the comparison is instructive. The only gap between unrestricted tensor degeneracy and boundary-format hyperdeterminant vanishing is already a nontrivial problem in algebraic derandomization.

\section{Hardness Source: Quadratic, Bilinear, and Pencil Feasibility}
\label{sec:bilinear}

This section develops the algebraic hardness source for the main reduction. We begin from homogeneous quadratic feasibility, pass to a projective bilinear formulation, and then introduce a singular matrix-pencil feasibility problem that matches tensor degeneracy exactly.

\subsection{Homogeneous Quadratic Feasibility}

\begin{definition}[Homogeneous Quadratic Feasibility]
\label{def:quadratic}
Given symmetric matrices \(Q_1,\dots,Q_m \in \mathbb{Q}^{n\times n}\), decide whether there exists a nonzero vector \(u \in \mathbb{R}^n\) such that
\begin{equation}
 u^\top Q_t u = 0
 \qquad \forall t \in [m].
\label{eq:quadratic}
\end{equation}
\end{definition}

This problem is \(\ER\)-complete~\cite{ERCompendium}.

\subsection{Projective Bilinear Feasibility}

\begin{definition}[Projective Bilinear Feasibility]
\label{def:bilinear}
Given matrices \(M_1,\dots,M_r \in \mathbb{Q}^{n\times n}\), decide whether there exist \emph{nonzero} vectors \(x,y \in \mathbb{R}^n\) such that
\begin{equation}
 x^\top M_\ell y = 0
 \qquad \forall \ell \in [r].
\label{eq:bilinear}
\end{equation}
\end{definition}

\begin{theorem}
\label{thm:bilinear-hard}
Projective bilinear feasibility is \(\ER\)-complete.
\end{theorem}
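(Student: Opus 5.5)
The plan is to prove membership in \(\ER\) directly, and hardness by a polynomial-time reduction from homogeneous quadratic feasibility (\cref{def:quadratic}), which is \(\ER\)-complete. The reduction adds skew-symmetric bilinear constraints that force the two vectors \(x\) and \(y\) to be parallel.

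\emph{Membership.} Given \(M_1,\dots,M_r\), the instance is feasible if and only if the existential sentence
\[
\exists x,y\in\mathbb{R}^n:\ \bigwedge_{\ell} x^\top M_\ell y=0 \ \wedge\ \sum_i x_i^2>0 \ \wedge\ \sum_j y_j^2>0
\]
holds. This sentence has polynomial size with rational coefficients, so the problem lies in \(\ER\).

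\emph{Hardness.} Let \(Q_1,\dots,Q_m\) be an instance of homogeneous quadratic feasibility. For every pair \(1\le i<j\le n\), define the skew matrix \(A_{ij}=E_{ij}-E_{ji}\), so that \(x^\top A_{ij}y=x_iy_j-x_jy_i\). The output bilinear instance consists of the \(r=m+\binom n2\) matrices
\[
\{Q_t\}_{t\in[m]}\ \cup\ \{A_{ij}\}_{i<j}.
\]
It is computable in polynomial time and keeps rational entries.

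\emph{Completeness of the reduction.} Suppose \(u\neq 0\) satisfies \(u^\top Q_t u=0\) for all \(t\). Take \(x=y=u\). Then \(x^\top A_{ij}y=u_iu_j-u_ju_i=0\) for every pair, and \(x^\top Q_t y=0\) for every \(t\).

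\emph{Soundness of the reduction.} Suppose nonzero \(x,y\) satisfy all \(r\) constraints. The vanishing of every \(x_iy_j-x_jy_i\) says that all \(2\times 2\) minors of the \(n\times 2\) matrix \([x\ y]\) vanish, so this matrix has rank at most \(1\). Since \(x\neq0\), we get \(y=\lambda x\), and \(\lambda\neq0\) because \(y\neq0\). Then
\[
0=x^\top Q_t y=\lambda\, x^\top Q_t x \quad\text{for all } t,
\]
so \(u=x\) is a nonzero solution of \eqref{eq:quadratic}.

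\emph{Main obstacle.} The only substantive step is forcing \(x\) and \(y\) to be proportional using bilinear equations alone. The \(2\times2\)-minor (wedge product \(x\wedge y=0\)) constraints do exactly this. The nonzero requirement in \cref{def:bilinear} is what excludes the degenerate case \(\lambda=0\). Note that symmetry of the \(Q_t\) is not needed, since \(x^\top Q_t x\) depends only on the symmetric part of \(Q_t\).
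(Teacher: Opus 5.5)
Your proposal is correct and is essentially the paper's proof: the same reduction from homogeneous quadratic feasibility by adjoining the $2\times 2$ minor constraints $x_iy_j-x_jy_i=0$, followed by the same rank-one argument giving $y=\lambda x$ with $\lambda\neq 0$. Your encoding of these minors as bilinear forms $x^\top(E_{ij}-E_{ji})y$ makes explicit a point the paper leaves implicit, namely that the constructed system is an instance of the bilinear problem with $m+\binom{n}{2}$ matrices.
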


\begin{proof}
Membership in \(\ER\) is immediate.

For hardness, reduce from \Cref{def:quadratic}. Given \(Q_1,\dots,Q_m\), introduce variables \(x,y \in \mathbb{R}^n\) and the bilinear constraints
\begin{equation}
 x^\top Q_t y = 0
 \qquad \forall t \in [m].
\label{eq:lifted-bilinear}
\end{equation}
To force \(x\) and \(y\) to represent the same projective point, add all \(2\times 2\) minor constraints
\begin{equation}
 x_i y_j - x_j y_i = 0
 \qquad \forall\, 1 \le i < j \le n.
\label{eq:minor-constraints}
\end{equation}

We claim that the resulting system has a solution with \(x\neq 0\) and \(y\neq 0\) if and only if \eqref{eq:quadratic} has a nonzero solution.

If \(u\neq 0\) satisfies \eqref{eq:quadratic}, then setting \(x=y=u\) satisfies \eqref{eq:lifted-bilinear} and \eqref{eq:minor-constraints}.

Conversely, suppose \(x\neq 0\) and \(y\neq 0\) satisfy \eqref{eq:lifted-bilinear}--\eqref{eq:minor-constraints}. The vanishing of all \(2\times 2\) minors implies that the \(n\times 2\) matrix \(\begin{pmatrix}x & y\end{pmatrix}\) has rank at most one. Since both columns are nonzero, there exists \(\lambda\in\mathbb{R}\setminus\{0\}\) such that \(y=\lambda x\). Substituting into \eqref{eq:lifted-bilinear} gives
\[
0 = x^\top Q_t y = x^\top Q_t (\lambda x)=\lambda\,x^\top Q_t x
\qquad \forall t \in [m].
\]
Since \(\lambda\neq 0\), we obtain \(x^\top Q_t x=0\) for all \(t\), so \(x\neq 0\) solves \eqref{eq:quadratic}. The reduction is polynomial-time.
\end{proof}

\subsection{Singular Bilinear Pencil Feasibility}

\begin{definition}[Singular Bilinear Pencil Feasibility]
\label{def:pencil}
Given matrices \(A_0,A_1,\dots,A_r \in \mathbb{Q}^{n\times n}\), decide whether there exist nonzero vectors
\[
 x,y \in \mathbb{R}^n,
 \qquad z \in \mathbb{R}^{r+1}
\]
such that, writing
\[
M(z):=\sum_{\ell=0}^{r} z_\ell A_\ell,
\]
the following hold:
\begin{align}
 x^\top A_\ell y &= 0
 \qquad \forall \ell \in \{0,\dots,r\},
\label{eq:pencil1}\\
 M(z)y &= 0,
\label{eq:pencil2}\\
 M(z)^\top x &= 0.
\label{eq:pencil3}
\end{align}
\end{definition}

\begin{theorem}
\label{thm:pencil-hard}
Singular bilinear pencil feasibility is \(\ER\)-hard.
\end{theorem}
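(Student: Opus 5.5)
The plan is a polynomial-time reduction from projective bilinear feasibility (\Cref{def:bilinear}), which is \(\ER\)-complete by \Cref{thm:bilinear-hard}. The conditions in \Cref{def:pencil} consist of the bilinear constraints \eqref{eq:pencil1} together with the pencil constraints \eqref{eq:pencil2}--\eqref{eq:pencil3}. The idea is to choose the instance so that the pencil constraints can always be met by one fixed nonzero \(z\). Then only the bilinear part carries information, and that part is exactly a bilinear feasibility instance.

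\textbf{Construction.} Given \(M_1,\dots,M_r\in\mathbb{Q}^{n\times n}\), set
\[
A_0 := 0 \in \mathbb{Q}^{n\times n}, \qquad A_\ell := M_\ell \quad (\ell=1,\dots,r).
\]
This is clearly computable in polynomial time.

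\textbf{Forward direction.} Suppose \(x,y\neq 0\) satisfy \eqref{eq:bilinear}. Take \(z:=e_0\in\mathbb{R}^{r+1}\), which is nonzero. Then \(M(z)=A_0=0\), so \eqref{eq:pencil2} and \eqref{eq:pencil3} hold trivially. Condition \eqref{eq:pencil1} holds for \(\ell=0\) because \(A_0=0\). It holds for \(\ell\ge 1\) by \eqref{eq:bilinear}.

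\textbf{Backward direction.} Suppose nonzero \(x,y,z\) satisfy \eqref{eq:pencil1}--\eqref{eq:pencil3}. Restricting \eqref{eq:pencil1} to \(\ell=1,\dots,r\) gives \(x^\top M_\ell y=0\) for all \(\ell\), with \(x,y\neq 0\). This is a solution of \eqref{eq:bilinear}; the vector \(z\) and the pencil conditions are simply discarded.

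I do not expect a real obstacle. The only point needing care is that \Cref{def:pencil} permits a zero matrix among the \(A_\ell\) and requires only \(z\neq 0\). Both are true as the definition is stated.

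If one preferred a reduction that does not rely on the zero coefficient, a fallback is to pad. Embed \(x,y\) into \(\mathbb{R}^{n+1}\) and let \(A_0\) be the rank-one matrix \(e_{n+1}e_{n+1}^\top\). The bilinear constraint for \(A_0\) then forces \(x_{n+1}y_{n+1}=0\), and this would need a short case analysis. For the statement as given, however, the \(A_0=0\) construction suffices and completes the proof.
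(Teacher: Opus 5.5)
Your proof is correct and follows the paper's argument exactly: set $A_0=0$ and $A_\ell=M_\ell$, take $z=e_0$ (so $M(z)=0$) in the forward direction, and discard $z$ in the backward direction. The padded fallback is not needed, and with zero-padded $A_\ell$ it would actually fail: whenever some nonzero real combination $\sum_{\ell\ge1}z_\ell M_\ell$ has a kernel vector $y'\neq 0$, the triple $x=e_{n+1}$, $y=(y',0)$, $z=(0,z_1,\dots,z_r)$ satisfies all three pencil conditions without solving the bilinear system, so it is best dropped.
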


\begin{proof}
Reduce from \Cref{def:bilinear}. Given matrices \(M_1,\dots,M_r\), define
\[
A_0:=0,
\qquad
A_\ell:=M_\ell \quad (\ell=1,\dots,r).
\]
If \(x\neq 0\), \(y\neq 0\) satisfy \(x^\top M_\ell y=0\) for all \(\ell\in[r]\), then with
\[
z=e_0=(1,0,\dots,0)
\]
we have \(M(z)=A_0=0\), so \eqref{eq:pencil2}--\eqref{eq:pencil3} hold trivially and \eqref{eq:pencil1} reduces to the original bilinear system. Conversely, any feasible pencil witness automatically satisfies \(x^\top M_\ell y=0\) for all \(\ell\in[r]\) by \eqref{eq:pencil1}. Thus the reduction is exact.
\end{proof}

\section{Real 3-Tensor Degeneracy is \texorpdfstring{$\exists\mathbb{R}$}{ER}-Complete}
\label{sec:main}

We now prove the main theorem. The reduction is exact: tensor degeneracy is polynomial-time equivalent to singular bilinear pencil feasibility.

Given matrices $A_0,\dots,A_r\in\mathbb{Q}^{n\times n}$ from a singular bilinear pencil instance, we define a tensor
\[
T\in\mathbb{Q}^{n\times n\times (r+1)}
\]
by stacking the matrices as slices:
\begin{equation}
T(:,:,\,\ell+1)=A_\ell
\qquad \forall \ell\in\{0,\dots,r\}.
\label{eq:slices}
\end{equation}

\begin{definition}[Real 3-tensor degeneracy]
\label{def:tensor-deg}
Given a rational tensor \(T \in \mathbb{Q}^{n\times n\times m}\), decide whether there exist nonzero vectors
\[
 x \in \mathbb{R}^n,
 \qquad y \in \mathbb{R}^n,
 \qquad z \in \mathbb{R}^m
\]
such that
\begin{align}
 T(x,y,\cdot) &= 0,
\label{eq:deg1}\\
 T(x,\cdot,z) &= 0,
\label{eq:deg2}\\
 T(\cdot,y,z) &= 0.
\label{eq:deg3}
\end{align}
\end{definition}

\begin{proposition}
\label{prop:exact-equivalence}
Under the slice construction \eqref{eq:slices}, the tensor \(T\) is degenerate if and only if the corresponding singular bilinear pencil instance is feasible.
\end{proposition}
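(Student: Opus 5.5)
The plan is to unpack the three contraction conditions of \Cref{def:tensor-deg} under the slice construction \eqref{eq:slices} and match them term by term with \eqref{eq:pencil1}--\eqref{eq:pencil3}. Write \(T_{ij\,\ell+1}=(A_\ell)_{ij}\) for \(i,j\in[n]\) and \(\ell\in\{0,\dots,r\}\), and index \(z\in\mathbb{R}^{r+1}\) as \((z_0,\dots,z_r)\), so that \(m=r+1\). The witnesses on both sides have the same shape: nonzero \(x,y\in\mathbb{R}^n\) and a nonzero \(z\in\mathbb{R}^{r+1}\). I therefore expect to prove that the two systems of equations are literally identical, which gives both directions of the equivalence at once.

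First, compute each contraction in coordinates.
\begin{itemize}[label={}]
\item For the third mode, the \((\ell+1)\)-th entry of \(T(x,y,\cdot)\) is \(\sum_{i,j}x_i(A_\ell)_{ij}y_j=x^\top A_\ell y\). Hence \eqref{eq:deg1} is exactly \eqref{eq:pencil1}.
\item For the first mode, the \(i\)-th entry of \(T(\cdot,y,z)\) is \(\sum_{j,\ell}(A_\ell)_{ij}y_jz_\ell=(M(z)y)_i\). Hence \eqref{eq:deg3} is exactly \eqref{eq:pencil2}.
\item For the second mode, the \(j\)-th entry of \(T(x,\cdot,z)\) is \(\sum_{i,\ell}x_i(A_\ell)_{ij}z_\ell=(M(z)^\top x)_j\). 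Hence \eqref{eq:deg2} is exactly \eqref{eq:pencil3}.
\end{itemize}

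Next, conclude. Any nonzero triple \((x,y,z)\) satisfies \eqref{eq:deg1}--\eqref{eq:deg3} if and only if it satisfies \eqref{eq:pencil1}--\eqref{eq:pencil3}, and the nonzeroness requirements in the two definitions coincide. So the degeneracy witnesses of \(T\) are precisely the pencil witnesses, with no change of variables needed. The construction \eqref{eq:slices} is clearly polynomial time, and it is invertible, since any tensor in \(\mathbb{Q}^{n\times n\times m}\) can be read back as a pencil with \(r=m-1\). This yields the polynomial-time equivalence promised at the start of \Cref{sec:main}.

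There is no real obstacle here. The only step needing care is the bookkeeping of indices: the shift \(\ell\mapsto\ell+1\) between slice labels and pencil labels, and making sure that contraction in the second mode produces \(M(z)^\top x\) rather than \(M(z)x\). I would fix the coordinate convention \(T(x,y,z)=\sum_{i,j,k}T_{ijk}x_iy_jz_k\) explicitly before computing, so that the transpose appears in the right place.
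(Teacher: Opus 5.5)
Your proposal is correct and follows essentially the same approach as the paper, which likewise computes the three contractions under the slice construction and observes that they coincide with \eqref{eq:pencil1}--\eqref{eq:pencil3}, with identical nonzeroness requirements. Your explicit coordinate convention and the remark about invertibility of the slicing are harmless additions beyond what the paper's proof states.
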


\begin{proof}
Let \(z=(z_0,\dots,z_r)^\top\in\mathbb{R}^{r+1}\). By construction,
\[
T(x,y,\cdot)=
\begin{pmatrix}
 x^\top A_0 y\\
 x^\top A_1 y\\
 \vdots\\
 x^\top A_r y
\end{pmatrix},
\]
so \(T(x,y,\cdot)=0\) if and only if \(x^\top A_\ell y=0\) for all \(\ell\). Also,
\[
T(\cdot,y,z)=\sum_{\ell=0}^{r} z_\ell A_\ell y = M(z)y,
\qquad
T(x,\cdot,z)=\sum_{\ell=0}^{r} z_\ell A_\ell^\top x = M(z)^\top x.
\]
Hence the tensor degeneracy equations are exactly the pencil feasibility equations.
\end{proof}

\begin{theorem}
\label{thm:tensor-ER}
Real 3-tensor degeneracy is \(\ER\)-complete.
\end{theorem}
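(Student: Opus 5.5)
The plan is to prove \Cref{thm:tensor-ER} in two parts, membership and hardness. Hardness is assembled from the chain of reductions already established, and membership from a direct encoding of \Cref{def:tensor-deg} as an existential first-order sentence over the reals.

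\emph{Membership in \(\ER\).} Given \(T\in\mathbb{Q}^{n\times n\times m}\), I will write the sentence
\[
\exists x\in\mathbb{R}^n\,\exists y\in\mathbb{R}^n\,\exists z\in\mathbb{R}^m:\ \|x\|^2>0\ \wedge\ \|y\|^2>0\ \wedge\ \|z\|^2>0\ \wedge\ \eqref{eq:deg1}\wedge\eqref{eq:deg2}\wedge\eqref{eq:deg3}.
\]
Each of \eqref{eq:deg1}--\eqref{eq:deg3} is a finite list of polynomial equations of degree two with rational coefficients taken from the entries of \(T\): \(m\) equations for \eqref{eq:deg1} and \(n\) equations each for \eqref{eq:deg2} and \eqref{eq:deg3}. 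The sentence therefore has size polynomial in the bit-size of \(T\). If one prefers to avoid strict inequalities, the nonzero conditions can be replaced by the normalizations \(\|x\|^2=\|y\|^2=\|z\|^2=1\). This is legitimate because every equation is homogeneous in each of \(x\), \(y\), and \(z\) separately, so rescaling each vector does not affect feasibility.

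\emph{Hardness.} I will compose three polynomial-time reductions. First, \Cref{def:quadratic} is \(\ER\)-complete~\cite{ERCompendium} and reduces to \Cref{def:bilinear} by \Cref{thm:bilinear-hard}. Second, \Cref{def:bilinear} reduces to \Cref{def:pencil} by \Cref{thm:pencil-hard}, using the padding slice \(A_0=0\). Third, \Cref{def:pencil} maps to \Cref{def:tensor-deg} through the slice construction \eqref{eq:slices}, and \Cref{prop:exact-equivalence} shows that this map preserves yes- and no-instances exactly. Each step is computable in polynomial time and blows up the instance size at most polynomially: the minor constraints contribute \(O(n^2)\) matrices of size \(n\times n\), and stacking \(r+1\) slices is linear in the input. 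Hence the composite is a polynomial-time many-one reduction from an \(\ER\)-complete problem, and real 3-tensor degeneracy is \(\ER\)-hard.

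\emph{Main obstacle.} The step I will check most carefully is that the nonzero-vector requirements match at every interface. In \Cref{def:pencil}, \(z\) is required to be nonzero, and in the forward direction of \Cref{thm:pencil-hard} this is witnessed by \(z=e_0\). Under the slice construction this witness becomes \(e_1\in\mathbb{R}^{r+1}\), which is nonzero in \(\mathbb{R}^m\) with \(m=r+1\), so the nonzero conditions on \(z\) in the two definitions agree. The only other thing to confirm is that the square format \(n\times n\times m\) is consistent. This holds because \Cref{def:bilinear} uses square \(n\times n\) matrices, so the resulting tensors always lie in the format required by \Cref{def:tensor-deg}.
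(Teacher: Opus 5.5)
Your proposal is correct and matches the paper's proof. Membership uses the degree-two equations \eqref{eq:deg1}--\eqref{eq:deg3} together with the sum-of-squares nonzeroness conditions. Hardness composes \Cref{thm:bilinear-hard} and \Cref{thm:pencil-hard} with the slice construction of \Cref{prop:exact-equivalence}. Your extra bookkeeping (the minor constraints as $O(n^2)$ additional bilinear-form matrices, and the normalization alternative justified by separate homogeneity in $x$, $y$, $z$) is accurate and only makes explicit what the paper leaves implicit.
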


\begin{proof}
Membership is immediate: the equations \eqref{eq:deg1}--\eqref{eq:deg3} are polynomial equations of degree two in the coordinates of \(x,y,z\), and nonzeroness can be expressed by
\[
\sum_i x_i^2>0,
\qquad
\sum_i y_i^2>0,
\qquad
\sum_j z_j^2>0.
\]
Thus tensor degeneracy lies in \(\ER\).

Hardness follows from \Cref{thm:pencil-hard,prop:exact-equivalence}. The reduction is polynomial-time because \(T\) is obtained by stacking the input matrices as slices. Therefore real 3-tensor degeneracy is \(\ER\)-hard, hence \(\ER\)-complete.
\end{proof}

\section{Boundary Format and Hyperdeterminant Vanishing}
\label{sec:boundary}

We now relate the intrinsic notion of degeneracy to the classical hyperdeterminant. The key point is that the two coincide only in boundary format. This identifies the natural target problem beyond unrestricted tensor degeneracy, and isolates the exact missing step required to transfer \(\ER\)-hardness.

Let
\[
T \in \mathbb{R}^{n_0\times n_1\times n_2}
\]
and write \(n_i=k_i+1\).

\begin{definition}[Boundary format]
\label{def:boundary-format}
The format \((n_0,n_1,n_2)\) is called \emph{boundary format} if, after a permutation of indices,
\begin{equation}
 k_0 = k_1 + k_2.
\label{eq:boundary-k}
\end{equation}
Equivalently,
\begin{equation}
 n_0 = n_1 + n_2 - 1.
\label{eq:boundary-n}
\end{equation}
\end{definition}

\begin{theorem}[GKZ]
\label{thm:GKZ-hyperdet}
Let \(T\) be a tensor in boundary format. Then there exists a nonzero homogeneous polynomial \(\Det(T)\), the hyperdeterminant, such that
\[
\Det(T)=0
\quad \Longleftrightarrow \quad
T \text{ is degenerate.}
\]
\end{theorem}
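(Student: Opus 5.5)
We prove the statement (the classical Gelfand--Kapranov--Zelevinsky theorem) through projective duality. Work over \(\mathbb{C}\) first and regard \(T\) as a hyperplane in \(\mathbb{P}(\mathbb{C}^{n_0}\otimes\mathbb{C}^{n_1}\otimes\mathbb{C}^{n_2})^{*}\). Degeneracy of \(T\), in the sense of \Cref{def:tensor-deg} extended to unequal formats, says exactly that this hyperplane is tangent to the Segre variety \(X=\mathbb{P}^{k_0}\times\mathbb{P}^{k_1}\times\mathbb{P}^{k_2}\) at the point \([x\otimes y\otimes z]\). The reason is that the multilinear form \(T(x,y,z)\) and all of its partial derivatives vanish there, and by multilinearity these partial derivatives are exactly the modewise contractions \eqref{eq:deg1}--\eqref{eq:deg3}. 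The degenerate tensors therefore form the dual variety \(X^\vee\), which is irreducible. If \(X^\vee\) is a hypersurface, we define \(\Det\) as its reduced defining polynomial. This polynomial is homogeneous and nonzero, and by construction it vanishes exactly on \(X^\vee\).

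The plan has three steps. \textbf{Step 1.} Set up the incidence variety
\[
I=\{([x],[y],[z],[T]) : T(x,\cdot,\cdot),\,T(\cdot,y,\cdot),\,T(\cdot,\cdot,z)\text{ contractions vanish}\}.
\]
This is a projective bundle over \(X\). The tangency conditions at a fixed point impose \(1+k_0+k_1+k_2\) independent linear conditions on \(T\), so
\[
\dim I=(k_0+k_1+k_2)+(N-1-k_0-k_1-k_2)=N-1,
\qquad N=n_0n_1n_2-1.
\]
It follows that \(X^\vee=\pi(I)\) has dimension at most \(N-1\), with equality exactly when the generic fiber of \(\pi\) is finite.

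\textbf{Step 2.} Show that the generic fiber is finite, and in fact nonempty in a controlled way, precisely when \(k_0\le k_1+k_2\) (after ordering so that \(k_0\) is the largest index). In boundary format this is the borderline case \(k_0=k_1+k_2\). I would make this concrete as follows. In boundary format, view \(T\) as a linear map
\[
\mathbb{C}^{n_0}\otimes S \to \mathbb{C},
\]
where \(S\) is the appropriate space, and then build the GKZ Koszul-type square matrix
\[
\partial_T:\ \mathbb{C}^{n_0}\otimes S^{k_1-1}\mathbb{C}^{n_1}\otimes S^{k_2}\mathbb{C}^{n_2}\ \longrightarrow\ S^{k_1}\mathbb{C}^{n_1}\otimes S^{k_2}\mathbb{C}^{n_2}\ (\text{suitably twisted}).
\]
Here \(\partial_T\) is a square matrix exactly because \(n_0=n_1+n_2-1\); checking this is a dimension count. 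Define \(\Det(T):=\det\partial_T\). This gives nonvanishing directly: exhibit one explicit nondegenerate tensor, for instance the "identity" boundary tensor \(T_{ijk}=\delta_{i,j+k}\), for which \(\partial_T\) is visibly invertible.

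\textbf{Step 3.} Prove that \(\det\partial_T=0\) if and only if \(T\) is degenerate. For the direction "degenerate implies singular", produce an explicit kernel vector of \(\partial_T\) from the degeneracy witness \((x,y,z)\), using powers of \(y\) and \(z\). For the converse, use the fact that \(\partial_T\) is the relevant differential of the Koszul complex of the vector bundle \(\mathcal{O}(0,k_1? )\)-twist on \(\mathbb{P}^{k_1}\times\mathbb{P}^{k_2}\) defined by the section \(T\). When \(T\) is nondegenerate, this section of a rank-\((k_1+k_2+1)\) bundle has no zeros. The Koszul complex is then exact, and a cohomology-vanishing argument (Bott/Künneth on products of projective spaces) shows that \(\partial_T\) is an isomorphism.

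The main obstacle is this converse direction of Step 3. It requires the vanishing of the intermediate sheaf cohomology, and this is where the boundary-format numerology enters essentially. Finally, since \(T\) is real, the polynomial has rational coefficients and the equivalence transfers to \(\mathbb{R}\) in the same sense as for the determinant: real degeneracy implies \(\Det(T)=0\) via the explicit kernel. In the other direction, the kernel of the real matrix \(\partial_T\) is defined over \(\mathbb{R}\), and in boundary format the degeneracy point can be recovered from that kernel and is therefore real. I would verify this last point separately, since complex witnesses could a priori fail to be real.
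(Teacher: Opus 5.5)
Over \(\mathbb{C}\), your outline follows the same route as GKZ, Chapter~14, which is all the paper's proof consists of (a citation): in boundary format the dual of the Segre variety is a hypersurface, and its equation is a Cayley--Koszul determinant. The genuine gap is your final paragraph. The transfer to real witnesses is not merely unverified; it is false. Take \(T\in\mathbb{R}^{3\times 2\times 2}\), which is boundary format since \(3=2+2-1\), defined by
\[
T(\cdot,y,z)=\bigl(y_1z_1-y_2z_2,\; y_1z_2+y_2z_1,\; 0\bigr).
\]
Over \(\mathbb{C}\), the triple \(x=e_3\), \(y=(1,i)\), \(z=(i,1)\) is a degeneracy witness, because \(T(e_3,\cdot,\cdot)=0\). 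Hence \(\Det(T)=0\). Over \(\mathbb{R}\), the first two coordinates of \(T(\cdot,y,z)\) are the real and imaginary parts of \((y_1+iy_2)(z_1+iz_2)\). This product vanishes only if \(y=0\) or \(z=0\), so no real witness exists. Your proposed mechanism fails visibly here: \(\partial_T\) kills every \(e_3\otimes g\), so it has a real kernel, yet no real witness can be recovered from it. Over \(\mathbb{R}\), only the implication from real degeneracy to \(\Det(T)=0\) survives. The equivalence holds when degeneracy means complex witnesses, and that is exactly what the GKZ citation provides. Read with the real witnesses of \Cref{def:tensor-deg}, the statement itself fails. You should state and prove it over \(\mathbb{C}\) rather than try to verify the real claim.

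Two further repairs are needed inside the complex argument. First, your \(\partial_T\) is neither square nor degree-consistent. For \(3\times 2\times 2\) it maps a \(6\)-dimensional space to a \(4\)-dimensional one. Multiplication by the bilinear forms \(T(e_i,y,z)\) must raise both the \(y\)-degree and the \(z\)-degree by one. The GKZ choice is
\[
\mathbb{C}^{n_0}\otimes S^{k_1}\mathbb{C}^{n_2}\longrightarrow \mathbb{C}^{n_1}\otimes S^{k_1+1}\mathbb{C}^{n_2},
\qquad
\textstyle\sum_i e_i\otimes g_i\longmapsto \sum_i T(e_i,y,z)\,g_i(z),
\]
and both sides have dimension \(n_0\binom{k_1+k_2}{k_1}\). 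With this map, degenerate \(\Rightarrow\) singular is immediate: evaluation at the witness \((y,z)\) is a nonzero functional that annihilates the image. Second, the converse in Step~3 silently uses a boundary-format lemma: any nonzero \((y,z)\) with \(T(\cdot,y,z)=0\) extends to a full witness. Without it, the claim that a nondegenerate \(T\) gives a section with no zeros is unjustified. The lemma follows from a count. The \(n_1+n_2\) linear conditions \(T(x,\cdot,z)=0\) and \(T(x,y,\cdot)=0\) on \(x\) satisfy two independent relations, \(\sum_j y_j\,T(x,\cdot,z)_j\equiv 0\) and \(\sum_k z_k\,T(x,y,\cdot)_k\equiv 0\). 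Both hold identically in \(x\) because each sum equals \(T(x,y,z)=\langle x,T(\cdot,y,z)\rangle=0\). Hence the conditions have rank at most \(n_1+n_2-2=n_0-1\), and a nonzero \(x\) exists.
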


\begin{proof}[Reference]
See GKZ~\cite[Chapter 14]{GKZ94}. In boundary format the dual variety to the Segre embedding is a hypersurface, and its defining equation is the hyperdeterminant.
\end{proof}

\begin{remark}
Outside boundary format, degeneracy remains meaningful but is generally not captured by a single polynomial equation. Thus the hyperdeterminant is not a universal multilinear analogue of the determinant; it is the boundary-format specialization of the intrinsic notion of degeneracy.
\end{remark}

\begin{definition}[Boundary hyperdeterminant vanishing]
\label{def:hyperdet-problem}
Given a rational tensor \(T\) in boundary format, decide whether \(\Det(T)=0\).
\end{definition}

Since unrestricted tensor degeneracy is \(\ER\)-complete, proving \(\ER\)-hardness of \Cref{def:hyperdet-problem} would follow from a deterministic polynomial-time lifting
\[
T \longmapsto \widehat T
\]
with the properties that \(\widehat T\) is in boundary format and
\[
T \text{ degenerate }
\iff
\widehat T \text{ degenerate.}
\]
This is the exact missing theorem.

\begin{problem}[Deterministic boundary-format lifting]
\label{prob:boundary-lift}
Construct, in polynomial time, for every tensor \(T\), a boundary-format tensor \(\widehat T\) such that
\[
T \text{ degenerate }
\quad \Longleftrightarrow \quad
\widehat T \text{ degenerate.}
\]
\end{problem}

A randomized version of this lifting is known~\cite{Narayanan2025}. One places the input tensor into a larger boundary-format tensor \(\widehat T(U)\), depending affinely on auxiliary variables \(U\), and studies the completion polynomial
\begin{equation}
P_T(U):=\Det(\widehat T(U)).
\label{eq:completion-poly}
\end{equation}
The known constructions have the property that:
\begin{itemize}
\item if \(T\) is degenerate, then \(P_T(U)\equiv 0\);
\item if \(T\) is nondegenerate, then \(P_T(U)\) is a nonzero polynomial.
\end{itemize}
Random evaluation then yields correctness with high probability by Schwartz--Zippel.

The missing step is deterministic selection of a point outside the zero set of \(P_T\).

\section{A Structured Derandomization Barrier}
\label{sec:pit}

This section isolates the exact remaining obstacle in lifting \(\ER\)-hardness from unrestricted tensor degeneracy to boundary-format hyperdeterminant vanishing.

\begin{definition}[Completion-PIT]
\label{def:completionpit}
Given an input tensor \(T\), decide whether the completion polynomial \(P_T(U)\) defined in \eqref{eq:completion-poly} is identically zero.
\end{definition}

\begin{definition}[Completion-hitting]
\label{def:hitting}
Given a tensor \(T\) such that \(P_T\not\equiv 0\), output a point \(U^\star\) such that
\[
P_T(U^\star)\neq 0.
\]
\end{definition}

\begin{theorem}
\label{thm:pit-barrier}
Suppose there exists a deterministic polynomial-time algorithm for Completion-hitting on the family \(\mathcal P=\{P_T\}\). Then there exists a deterministic polynomial-time identity test for the family \(\mathcal P\).
\end{theorem}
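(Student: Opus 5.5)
The argument is the standard conversion of a search-type hitting algorithm into a decision procedure. Let \(\mathcal A\) be the assumed deterministic polynomial-time Completion-hitting algorithm. Given \(T\), we run \(\mathcal A\) on \(T\), \emph{regardless} of whether \(P_T\not\equiv 0\) actually holds. We also enforce a polynomial clock \(p(|T|)\), taken to be the running-time bound \(\mathcal A\) enjoys on promise inputs. If \(\mathcal A\) does not halt within \(p(|T|)\) steps, or halts with malformed output, we declare \(P_T\equiv 0\). Otherwise it returns a candidate point \(U^\star\), which has polynomial bit length since it was written in polynomial time.

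The second step verifies the candidate by evaluating \(P_T(U^\star)=\Det(\widehat T(U^\star))\) exactly over \(\mathbb{Q}\). We declare \(P_T\not\equiv 0\) if and only if this value is nonzero. Correctness splits into two cases.
\begin{itemize}
\item If \(P_T\not\equiv 0\), the promise holds, so \(\mathcal A\) halts in time with a genuine hitting point and the test answers correctly.
\item If \(P_T\equiv 0\), then every evaluation is zero, so whatever \(\mathcal A\) outputs, the test declares \(P_T\equiv 0\).
\end{itemize}
The test therefore has one-sided soundness built in, and the clock ensures that the behaviour of \(\mathcal A\) outside its promise cannot break the running time.

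The main obstacle is the verification step: we need deterministic polynomial-time evaluation of \(P_T\) at a rational point. For this I would invoke the arithmetic circuits for boundary-format hyperdeterminants of Joux and Narayanan~\cite{JouxNarayanan2024}, composed with the affine map \(U\mapsto \widehat T(U)\). Two concerns arise.
\begin{itemize}
\item \emph{Circuit size.} The known circuits are quasipolynomial or subexponential in general, not polynomial. I would therefore either restrict \(\mathcal P\) to format families where the circuit is polynomial-size, or state the conclusion relative to the cost of one evaluation of \(P_T\). The latter is the honest reading: the reduction adds only polynomial overhead on top of a single evaluation.
\item \emph{Bit growth.} Intermediate values can blow up over \(\mathbb{Q}\). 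I would handle this by degree bounds on \(\Det\) (it is homogeneous of known degree), which bound the height of \(P_T(U^\star)\). Then either exact rational arithmetic along a determinantal representation (Bareiss-style) applies, or evaluation modulo enough small primes combined with the height bound certifies nonvanishing deterministically.
\end{itemize}
I expect the paper's proof to treat this evaluation as given. I would flag explicitly that the theorem's "polynomial-time" is inherited from that evaluation cost.
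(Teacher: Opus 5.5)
Your proposal follows the paper's route (run the hitting algorithm on \(T\) and decide according to whether it returns a point with \(P_T(U^\star)\neq 0\)), and it is more rigorous than the paper's two-line proof, which silently assumes two things: that the algorithm's behaviour outside the promise \(P_T\not\equiv 0\) is harmless, and that \(P_T(U^\star)\neq 0\) can actually be checked; your clock and explicit verification step supply both. Your caveat about verification cost is correct and applies equally to the paper's argument, because checking \(\Det(\widehat T(U^\star))\neq 0\) is itself an instance of boundary hyperdeterminant vanishing (\Cref{def:hyperdet-problem}), so what the argument gives unconditionally is a deterministic polynomial-time reduction from Completion-PIT to one such instance, and ``polynomial-time identity test'' holds only relative to the cost of that evaluation.
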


\begin{proof}
Given \(T\), run the hitting algorithm. If it produces \(U^\star\) with \(P_T(U^\star)\neq 0\), then \(P_T\not\equiv 0\). If no such point exists, then \(P_T\equiv 0\). Thus the hitting procedure decides identity to zero on the family \(\mathcal P\).
\end{proof}

The significance of \Cref{thm:pit-barrier} depends on the complexity of \(\mathcal P\). Recent work of Joux and Narayanan shows that boundary-format hyperdeterminants admit explicit arithmetic circuits, with complexity ranging from quasipolynomial to subexponential depending on the format family~\cite{JouxNarayanan2024}. Thus \(\mathcal P\) is an explicit arithmetic-circuit family.

\begin{remark}
A deterministic solution to Completion-hitting would therefore not merely complete a geometric lifting. It would provide a deterministic identity test for a nontrivial explicit family of hyperdeterminant circuits. In this precise sense, the remaining gap between unrestricted tensor degeneracy and boundary-format hyperdeterminant vanishing is a structured derandomization problem.
\end{remark}

For context, Kabanets and Impagliazzo showed that general deterministic PIT would imply major circuit lower bounds~\cite{KabanetsImpagliazzo}. We do not claim such consequences for the restricted family \(\mathcal P\); the point is that the missing step already lies squarely inside algebraic derandomization.

\section{Failure of Deterministic Boundary Embeddings}
\label{sec:failures}

We formalize the failure of several natural deterministic strategies for constructing boundary-format embeddings. Each strategy fails for a precise structural reason. Together, these failures isolate a single underlying obstruction.

\begin{definition}[Direct-sum embedding]
\label{def:direct-sum}
Let \(T \in \mathbb{R}^{n_1\times n_2\times n_3}\) and \(S \in \mathbb{R}^{m_1\times m_2\times m_3}\). Their direct sum \(\widehat T=T\oplus S\) is obtained by placing \(T\) and \(S\) on disjoint index sets and setting all cross entries to zero.
\end{definition}

\begin{proposition}
\label{prop:direct-sum-fails}
Direct-sum embedding does not preserve degeneracy.
\end{proposition}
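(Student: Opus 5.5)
The plan is to refute preservation of degeneracy with an explicit pair \(T,S\) for which \(T\) is nondegenerate but \(\widehat T=T\oplus S\) is degenerate. Since \Cref{def:direct-sum} allows \(S\) to be arbitrary, the freedom in \(S\) is exactly what an embedding designer would use to reach boundary format. So a single such pair already shows that direct sums cannot serve as the lifting in \Cref{prob:boundary-lift}.

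First I would record how the direct sum contracts. Split \(x=(x^{(1)},x^{(2)})\), \(y=(y^{(1)},y^{(2)})\), \(z=(z^{(1)},z^{(2)})\) along the two index blocks. Because all cross entries vanish,
\[
\widehat T(x,y,\cdot)=\bigl(T(x^{(1)},y^{(1)},\cdot),\,S(x^{(2)},y^{(2)},\cdot)\bigr),
\]
and the other two contractions decompose in the same way. Hence \(\widehat T\) is degenerate as soon as there are nonzero \(x,y,z\) whose block components make each block's contractions vanish separately. In particular, taking \(x,y,z\) supported entirely on the \(S\)-block shows that \(\widehat T\) is degenerate whenever \(S\) is, independently of \(T\).

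Next I would specialize to the simplest case. Take \(T=(1)\in\mathbb{R}^{1\times1\times1}\). This \(T\) is nondegenerate: for nonzero scalars \(x,y\), the contraction \(T(x,y,\cdot)=xy\neq 0\). Take \(S=(0)\in\mathbb{R}^{1\times1\times1}\). Then \(\widehat T\in\mathbb{R}^{2\times2\times2}\) has a single nonzero entry \(\widehat T_{111}=1\). The vectors \(x=y=z=e_2\) are nonzero, and every contraction vanishes, because each one involves some \(e_2\) component and hence only entries of \(S\) or cross entries, all of which are zero. So \(\widehat T\) is degenerate while \(T\) is not, which proves the proposition.

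The main obstacle is making sure the example is not dismissed as trivial, for instance by the objection that one would never pad with a zero block. To address this I would add a remark about nonzero padding. The choice \(S=(1)\) gives the diagonal tensor \(e_1\otimes e_1\otimes e_1+e_2\otimes e_2\otimes e_2\), which is nondegenerate, so there the failure is not forced. The point is that degeneracy of \(T\oplus S\) depends on \(S\) through its own degeneracy, so any rule for choosing \(S\) must avoid degenerate \(S\) in the target format. That requirement is exactly a nonvanishing (hitting) condition of the kind isolated in \Cref{thm:pit-barrier}. I would check the \(2\times2\times2\) computations by hand, since they involve only a few coordinates.
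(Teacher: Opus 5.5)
Your proof is correct and is essentially the paper's argument: witnesses supported entirely on the $S$-block make $\widehat T$ degenerate regardless of $T$, and you make this concrete with the explicit $2\times2\times2$ instance $T=(1)$, $S=(0)$, $x=y=z=e_2$. One caveat concerns only your closing remark. Degeneracy of $T\oplus S$ is not governed by the degeneracy of $S$ alone, because mixed-support witnesses (e.g.\ $x$ in the $T$-block and $y,z$ in the $S$-block, which only requires $S(\cdot,y^{(2)},z^{(2)})=0$) can make the sum degenerate even when both summands are nondegenerate. An example is $T=(1)$ and $S\in\mathbb{R}^{1\times2\times2}$ with $S(\cdot,y,z)=y^\top z$, witnessed by $x=e_1$, $y=e_2$, $z=e_3$.
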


\begin{proof}
If \(x,y,z\) are supported entirely on the coordinates of \(S\), then all contractions involving the \(T\)-block vanish identically and
\[
\widehat T(x,y,z)=S(x,y,z).
\]
Thus auxiliary witnesses can arise independently of \(T\). More generally, multilinear cancellation across blocks prevents any additive law analogous to block-diagonal matrix singularity.
\end{proof}

\begin{proposition}
\label{prop:pairwise-fails}
No system of pairwise polynomial constraints suffices to force a degeneracy witness to lie in a designated block.
\end{proposition}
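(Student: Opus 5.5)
The claim is vague as stated, so the first step is to make it precise. A \emph{system of pairwise polynomial constraints} will mean the following. We fix a block decomposition of each mode of \(\widehat T\) into a designated block (carrying \(T\)) and auxiliary coordinates, so that \(x=(x_T,x_A)\), \(y=(y_T,y_A)\), \(z=(z_T,z_A)\). We then impose polynomial equations, each of which involves the coordinates of at most two of the three vectors \(x,y,z\). The goal is to exhibit a configuration in which every such system either excludes a genuine witness or admits a witness with \(x_T=0\) (or \(y_T=0\), \(z_T=0\)). Either way, degeneracy of the augmented problem is decoupled from degeneracy of \(T\).

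The argument I would give is a homogeneity and scaling one. Degeneracy witnesses are projective in each factor separately: if \((x,y,z)\) is a witness, so is \((\alpha x,\beta y,\gamma z)\) for nonzero scalars. A pairwise constraint \(g(x,y)=0\) can only couple two factors. The trilinear structure, however, allows a witness in which one factor is supported on the auxiliary block while the other two carry the \(T\)-data. This is the mechanism already used in \Cref{prop:direct-sum-fails}.

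Concretely, I would take \(\widehat T=T\oplus S\) with \(S\) degenerate, witnessed by \((x_A,y_A,z_A)\). The plan has three steps.
\begin{enumerate}[label=(\roman*)]
\item Observe that any pairwise constraint forcing the \(x\)-block into the designated block must be violated by \((x_A,y_A,z_A)\) padded with zeros.
\item Show that such a constraint, restricted to a single pair of factors, then also kills some genuine mixed witness. For instance, the constraint could be a polynomial \(g(x,y)\) vanishing on \(\{x_A=0\}\times\mathbb{R}^{n}\) and nonvanishing at \((x_A,y_A)\).
\item Construct a nondegenerate \(T\) for which the only surviving solutions of the combined system are auxiliary. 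This shows the constraints cannot simultaneously be sound and complete.
\end{enumerate}

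The main obstacle is the lack of a formal definition. I would have to choose one that makes the statement true rather than vacuous, since arbitrary constraints could in principle encode \(x_A=0\), for example via \(\sum_i (x_A)_i^2=0\), which is a one-factor constraint. So I expect to restrict the admissible constraints, for instance to those that are multihomogeneous and realizable as additional tensor slices, so that the augmented object is still a tensor degeneracy instance. Under that restriction, forcing \(x_A=0\) would require a nonnegative-definite form, and that cannot arise as a contraction condition of the form \(\widehat T(\cdot,y,z)=0\).
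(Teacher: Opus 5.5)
There is a genuine gap, and it goes beyond the missing definition. Under your definition, and even under its multihomogeneous restriction, the statement is false, so steps (ii)--(iii) cannot be carried out.

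You already note the one-factor constraint \(\sum_i (x_A)_i^2=0\), but your proposed repair does not help. The claim that forcing \(x_A=0\) ``would require a nonnegative-definite form'' ignores that a witness must have \(x\neq 0\) and \(y\neq 0\). The bilinear monomial constraints \(x_a y_b=0\), for all auxiliary \(a\) and all \(b\), force \(x_A=0\): pick \(b\) with \(y_b\neq 0\). Likewise, \(x_a y_b=0\) for all \(a\) and auxiliary \(b\) forces \(y_A=0\), and \(x_a z_c=0\) for all \(a\) and auxiliary \(c\) forces \(z_A=0\). On \(\widehat T=T\oplus S\) the constrained system is then exactly the degeneracy system of \(T\).

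Your own example in (ii) shows the same thing. A \(g\) vanishing on \(\{x_A=0\}\times\mathbb{R}^n\) but not at \((x_A,y_A)\) kills the padded auxiliary witness and no witness with \(x_A=0\), which is the opposite of what (ii) asserts. Step (iii) contains no construction. If you instead require constraints to be realized as extra slices, the operative effect is different: new slices create new witnesses supported on the new coordinates, and the proposal does not analyze this.

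The paper avoids all of this by reading ``pairwise constraints'' narrowly, as the coordinatewise product conditions \(x_iy_i=0\), \(x_iz_i=0\), \(y_iz_i=0\). This is the reading made explicit in \Cref{prop:vandermonde-fails}. Its proof is a single example: nonzero vectors with singleton supports \(\{i\},\{j\},\{k\}\), \(i,j,k\) distinct, satisfy every such condition. So these constraints cannot rule out configurations supported off the designated block. Fixing that restricted class of constraints is the idea your proposal lacks; once it is fixed, no scaling, direct-sum, or soundness/completeness argument is needed.

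This argument covers only same-index products: for \(x=e_i\), \(y=e_j\) one has \(x_iy_j=1\), consistent with the cross-index counterexample above. Also, the paper only shows that the constraints do not exclude such configurations. It does not exhibit them as degeneracy witnesses of a specific \(\widehat T\).
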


\begin{proof}
Take vectors with disjoint supports,
\[
\mathrm{supp}(x)=\{i\},\qquad
\mathrm{supp}(y)=\{j\},\qquad
\mathrm{supp}(z)=\{k\},
\]
with \(i,j,k\) distinct. Then every pairwise coordinate product vanishes, yet all three vectors are nonzero. Hence pairwise constraints cannot eliminate all auxiliary witnesses.
\end{proof}

\begin{proposition}
\label{prop:vandermonde-fails}
Vandermonde-weighted pairwise constraints enforce coordinatewise annihilation but still permit distributed-support witnesses.
\end{proposition}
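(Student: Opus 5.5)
We make the statement precise first, and then exhibit an explicit witness, in the same spirit as \Cref{prop:pairwise-fails}. By a Vandermonde-weighted pairwise constraint we mean an auxiliary polynomial condition of the form
\[
\sum_{i} \alpha_i^{s}\, x_i y_i \;=\; 0 \qquad (s=0,1,\dots,N-1),
\]
with distinct nonzero weights \(\alpha_1,\dots,\alpha_N\). The same form applies to the pairs \((x,z)\) and \((y,z)\) on the shared index ranges. The point of such a family is that its coefficient matrix \((\alpha_i^s)_{s,i}\) is an invertible Vandermonde matrix.

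\emph{Step 1 (coordinatewise annihilation).} From invertibility of the Vandermonde matrix, the constraint system for a pair \((x,y)\) is equivalent to \(x_i y_i = 0\) for every \(i\). The weights therefore buy exactly coordinatewise annihilation, which is strictly stronger than a single pairwise sum. This proves the first half of the statement.

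\emph{Step 2 (distributed-support witnesses survive).} Coordinatewise annihilation says only that the supports of \(x\) and \(y\) are disjoint; it says nothing about where each support lies. We take a witness whose supports are spread across both the designated \(T\)-block and the auxiliary block, with \(\mathrm{supp}(x)\), \(\mathrm{supp}(y)\), \(\mathrm{supp}(z)\) pairwise disjoint but each meeting the auxiliary indices. As in \Cref{prop:pairwise-fails}, every product \(x_iy_i\), \(x_iz_i\), \(y_iz_i\) vanishes, so all the Vandermonde constraints hold, yet no vector is confined to the designated block. To make the witness also satisfy the degeneracy equations \eqref{eq:deg1}--\eqref{eq:deg3} of a direct-sum or padded tensor, we put each vector's auxiliary mass on coordinates where the auxiliary block \(S\) admits a degenerate triple. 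This can be done, for example, with \(S\) zero or a padding block in which a single coordinate vector is annihilated. By the computation in \Cref{prop:direct-sum-fails}, \(\widehat T(x,y,\cdot)\) and the other contractions then reduce to contributions of \(S\) that vanish.

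\emph{Main obstacle.} The delicate point is making Step 2 hold uniformly, so that the auxiliary witness satisfies the degeneracy equations regardless of \(T\), including when \(T\) is nondegenerate. We plan to handle this by choosing the auxiliary supports so that every cross term between the \(T\)-block and the auxiliary block is a product of coordinates from disjoint supports, and therefore vanishes identically. The remaining conditions then involve only the auxiliary block, which we choose to be degenerate. This shows that Vandermonde weighting controls the coordinatewise interactions but not the global support structure, so the obstruction is the same one as in \Cref{prop:pairwise-fails}.
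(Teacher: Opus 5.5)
Your Step 1 and the first half of Step 2 are the paper's proof. The weighted constraints force $x_iy_i=x_iz_i=y_iz_i=0$; you add the Vandermonde-invertibility justification, which the paper only asserts. Nonzero vectors with pairwise disjoint supports, as in \Cref{prop:pairwise-fails}, satisfy all of these constraints, so the auxiliary part of a witness need not vanish. That already establishes the statement, and the paper does not go further.

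The extension in the rest of Step 2 and in your ``main obstacle'' paragraph, however, does not work as written. It is not needed for the statement.

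\emph{Cross terms.} Disjoint supports only kill same-index products $x_iy_i$. The contractions are $\widehat T(x,y,\cdot)_k=\sum_{i,j}\widehat T_{ijk}x_iy_j$, and a term $x_iy_j$ with $i\neq j$ is nonzero whenever $i\in\mathrm{supp}(x)$ and $j\in\mathrm{supp}(y)$. So support disjointness makes no cross term vanish. In a direct sum the cross terms vanish because the cross entries of $\widehat T$ are zero. In a padded tensor with nonzero cross entries they need not vanish at all.

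\emph{The remaining conditions involve $T$.} Even after cross terms are gone, the remaining conditions for $T\oplus S$ include $T(x_T,y_T,\cdot)=0$, $T(x_T,\cdot,z_T)=0$ and $T(\cdot,y_T,z_T)=0$, where $x_T,y_T,z_T$ are the $T$-block parts of the witness. Suppose all three vectors meet the $T$-block, as ``spread across both blocks'' suggests. Then these equations say precisely that $(x_T,y_T,z_T)$ is a degeneracy triple for $T$. So no such witness exists when $T$ is nondegenerate, which is exactly the case you need. If only two of the vectors meet the $T$-block, a nontrivial bilinear system on $T$ still remains.

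\emph{What would work.} A witness valid uniformly in $T$ must have at least two of the three vectors supported entirely in the auxiliary block, with $S$ degenerate there, e.g.\ $S=0$ and distinct auxiliary coordinate vectors. That is the mechanism of \Cref{prop:direct-sum-fails}, not anything specific to Vandermonde weighting. Either drop the extension or restate it in that form.
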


\begin{proof}
Weighted constraints imply \(x_i y_i=0\), \(x_i z_i=0\), and \(y_i z_i=0\) for each coordinate \(i\). But this still allows disjoint-support vectors as in the proof of \Cref{prop:pairwise-fails}. Thus the auxiliary block need not vanish.
\end{proof}

\begin{proposition}
\label{prop:lowdim-fails}
No fixed low-dimensional parametrization of the completion variables can guarantee correctness for all inputs.
\end{proposition}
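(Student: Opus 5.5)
We will argue by contradiction, formalizing ``fixed low-dimensional parametrization'' as follows. There is a fixed integer \(d\) and, for each format, a deterministic polynomial-time map \(\phi\colon \mathbb{R}^d\to\) (completion variables), with \(\phi\) independent of the input \(T\). Correctness for all inputs would mean that for every nondegenerate \(T\) the restricted polynomial \(P_T(\phi(s))\) is not identically zero in \(s\), and that some point \(s^\star\) from a fixed finite candidate set \(S\subset\mathbb{R}^d\) satisfies \(P_T(\phi(s^\star))\neq 0\). The plan is to show that the family \(\{P_T\}\), as \(T\) ranges over nondegenerate inputs, has zero sets rich enough that every fixed finite (or fixed low-dimensional) family of evaluation points is covered by the zero set of some \(P_T\) with \(P_T\not\equiv 0\).

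The first step is to show that the zero sets \(Z(P_T)=\{U: \widehat T(U)\text{ degenerate}\}\) are genuine hypersurfaces that move with \(T\). By \Cref{thm:GKZ-hyperdet}, \(Z(P_T)\) is the preimage of the hyperdeterminant hypersurface under the affine map \(U\mapsto \widehat T(U)\). Because \(\widehat T(U)\) depends affinely on both \(T\) and \(U\), the joint polynomial \((T,U)\mapsto P_T(U)\) is not identically zero. We will then argue that its \(U\)-zero sets are not contained in a fixed proper subvariety independent of \(T\). Concretely, for any fixed point \(U_0\), the condition \(P_T(U_0)=0\) is a single nontrivial polynomial equation in the entries of \(T\). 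It therefore cuts out a hypersurface \(H_{U_0}\) in tensor space, and this hypersurface should not be contained in the degenerate locus.

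The second step, which is the main obstacle, handles the case where the parametrization has positive dimension \(d\) rather than being a finite set. The image \(\phi(\mathbb{R}^d)\) is a \(d\)-dimensional set. We must exhibit a nondegenerate \(T\) for which \(P_T\circ\phi\equiv 0\) while \(P_T\not\equiv 0\). We will use a dimension count on the incidence variety \(\{(T,s): P_T(\phi(s))=0\}\), which has codimension one. Requiring \(P_T\circ\phi\) to vanish identically imposes at most \(\binom{d+D}{d}\) linear conditions on coefficients, where \(D\) is the degree of \(P_T\circ\phi\). Each coefficient is polynomial in \(T\), so when the tensor space dimension \(n_1n_2n_3\) exceeds this count, the solution set in \(T\)-space is nonempty and of positive dimension.

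It then remains to show that this solution set is not contained in the degenerate locus. We will try to do this with an explicit input of nondegenerate format, in which the degenerate locus is itself a proper hypersurface in the boundary case, and a generic point of the constructed set avoids it. If that exact count proves delicate, we fall back on the finite-candidate version. There \(\bigcup_{s\in S}H_{\phi(s)}\) is a finite union of hypersurfaces, and by \Cref{thm:pit-barrier} any deterministic finite hitting set for all nondegenerate \(T\) would already yield identity testing for \(\mathcal P\). We then pick \(T\) in \(H_{\phi(s_1)}\cap\cdots\), chosen outside the degenerate variety, to defeat the fixed candidate set.
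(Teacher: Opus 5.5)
\textbf{There is a genuine gap, and it cannot be repaired within your approach.}

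\textbf{The counting step.} Requiring \(P_T\circ\phi\equiv 0\) imposes up to \(\binom{d+D}{d}\ge D+1\) conditions, each polynomial in \(T\). Nothing keeps \(D\) small. It scales with \(\deg\phi\) and with the \(U\)-degree of \(P_T\). That \(U\)-degree is bounded only by the degree \((k_1+k_2+1)!/(k_1!\,k_2!)\) of the lifted boundary-format hyperdeterminant, which is exponential in the format. So the inequality \(n_1n_2n_3>\binom{d+D}{d}\) has no reason to hold.

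Worse, nonemptiness of the solution set is vacuous. Every degenerate \(T\) has \(P_T\equiv 0\) and therefore satisfies all your conditions. An equation count only bounds component dimensions from below, and over \(\mathbb{R}\) it does not even guarantee real solutions. So it cannot separate new solutions from the degenerate locus. The entire content is producing a \(T\) in that set with \(P_T\not\equiv 0\), and this is exactly what you leave as ``should not be contained'' and ``we will try''. The finite-candidate fallback has the same hole. In addition, intersecting the hypersurfaces \(H_{\phi(s)}\), \(s\in S\), can only help when \(|S|<n_1n_2n_3\), and nothing in your formalization enforces that.

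\textbf{The target statement is false.} The statement your argument would establish never uses the polynomial-time hypothesis, and in that form it is false, so no refinement of the counting can work. Let \(m=n_1n_2n_3\) and let \(N\) be the number of completion variables. The \(k\)-tuples in \((\mathbb{C}^N)^k\) on which some \(P_T\not\equiv 0\) vanishes form the projection of an incidence set of dimension at most \(m+k(N-1)\). That projection lies in a proper subvariety once \(k>m\). Hence a generic, even integer, tuple \(p_1,\dots,p_{m+1}\) hits every nonzero \(P_T\). Take the interpolating curve \(\phi\) of degree at most \(m\) with \(\phi(j)=p_j\), and \(S=\{1,\dots,m+1\}\). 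This pair satisfies both of your correctness conditions with \(d=1\). Likewise, the Kronecker substitution \(U_i=t^{(\delta+1)^{i-1}}\), with \(\delta\) bounding the \(U\)-degree of \(P_T\), is a fixed one-parameter map that preserves nonvanishing.

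The genuine obstacle is explicitness: constructing such a \(\phi\) deterministically and uniformly. Proving unconditionally that no explicit \(\phi\) exists would mean showing that this PIT family cannot be derandomized by a low-dimensional generator. That is an open question, and a dimension count cannot touch it.

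The paper's proof claims only the weak reading. For a given \(\phi\), nothing a priori prevents its image from lying in the zero set of some \(P_T\not\equiv 0\). So correctness of a fixed parametrization is not automatic, and establishing it amounts to solving Completion-hitting. That weaker statement is the one you can actually justify.
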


\begin{proof}
Let \(P_T(U)\) be the completion polynomial and let \(U=\phi(\lambda)\) be any fixed low-dimensional parametrization. Even when \(P_T\not\equiv 0\), it may happen that \(P_T(\phi(\lambda))\equiv 0\), i.e. the image of \(\phi\) lies inside the zero set of \(P_T\). Since the construction must work uniformly for all tensors, such failures cannot be excluded a priori.
\end{proof}

\begin{proposition}
\label{prop:gadget-fails}
Local hyperdeterminant gadgets do not compose into a global reduction.
\end{proposition}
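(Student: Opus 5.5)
The argument follows the pattern of \Cref{prop:direct-sum-fails,prop:pairwise-fails}. I would first fix what a ``local gadget'' means: a constant-size boundary-format tensor \(G\) whose hyperdeterminant \(\Det(G)\) encodes one local constraint, for example one slice \(A_\ell\) of the pencil from \Cref{def:pencil} or one bilinear equation \(x^\top M_\ell y=0\). A ``composition'' is any assembly of gadgets \(G_1,\dots,G_s\) into one tensor \(\widehat T\) by placing them on designated index blocks and joining them through shared coordinates or linear coupling entries. The claim to establish is that, in general, the global hyperdeterminant is not determined by the local ones. Concretely, for the natural composition there is no uniform rule of the form
\[
\Det(\widehat T)=0 \iff \text{some Boolean combination of } \Det(G_i)=0,
\]
and so local vanishing cannot be propagated into a statement about degeneracy of the input \(T\).

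The proof has two steps.

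\emph{Step 1: uncoupled composition.} If the gadgets sit on disjoint index sets with no coupling, then \(\widehat T=G_1\oplus\cdots\oplus G_s\). The format of a direct sum of boundary-format tensors is generally no longer boundary format, since the relation \(n_0=n_1+n_2-1\) is not additive. So \(\Det(\widehat T)\) is not even defined. Even after padding, the argument of \Cref{prop:direct-sum-fails} applies: a witness supported on a single block \(G_i\) makes \(\widehat T\) degenerate whatever the other blocks contain. Degeneracy therefore behaves like an ``or'' over blocks, whereas encoding a conjunction of constraints requires an ``and''.

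\emph{Step 2: coupled composition.} If the gadgets share coordinates so that their witnesses are forced to agree, then a global witness \((x,y,z)\) may spread its support across several blocks. The disjoint-support witnesses of \Cref{prop:pairwise-fails,prop:vandermonde-fails} show that no family of pairwise or coordinatewise coupling constraints can force the witness to restrict to a single gadget. Hence degeneracy of \(\widehat T\) is not equivalent to a simultaneous local witness, and local gadget correctness does not transfer to the whole.

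Finally, I would connect this to \Cref{sec:pit}. If some fixed composition rule did work uniformly, it would give a fixed, input-independent choice of the completion variables. That contradicts \Cref{prop:lowdim-fails}: any such point or low-dimensional family may lie inside the zero set of \(P_T\).

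The main obstacle is making the statement precise enough to be provable. ``Local gadget'' and ``compose'' must be defined so that the argument covers every natural construction and is not merely a collection of examples. My first attempt would be to restrict to compositions that are linear in the gadget entries with block-structured support. Within that class I would give one explicit small counterexample: two gadgets, each nondegenerate on its own, whose composition is degenerate through a distributed-support witness. Such an example shows that no composition law of the form above can hold for the whole class.
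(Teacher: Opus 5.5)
There is a genuine gap, and it starts with what is supposed to fail to compose. In the paper, a local gadget is a deterministic completion in the smallest boundary format that collapses the hyperdeterminant to a sparse polynomial in a few variables. A nonvanishing point for such a polynomial is easy to find locally. The claim is about these \emph{polynomials}. The boundary-format hyperdeterminant is an elimination determinant, and once several gadgets are glued together it acquires mixed minors involving variables from different gadgets. No block-triangular structure isolates the local contributions. Hence \(P_T\) is neither a product nor a sum of the gadget polynomials, and local nonvanishing points do not assemble into a point outside the zero set of \(P_T\); this is what feeds \Cref{thm:obstruction}.

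Your proposal never examines \(\Det(\widehat T)\) as a polynomial; it argues only about supports of degeneracy witnesses. The one ingredient that would touch the paper's point is your deferred example. Two nondegenerate gadgets whose boundary-format composition is degenerate would show, for that one composition, that \(\Det(\widehat T)\) is not a product of powers of the \(\Det(G_i)\). But you do not construct it. Moreover, a single instance only fixes a putative Boolean rule on one local pattern. To refute every rule you need two compositions with the same local vanishing pattern and different global outcomes, and to cover a class you need a uniform argument rather than one example.

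The steps you do supply do not carry the weight you put on them. In Step 2 you cite \Cref{prop:pairwise-fails,prop:vandermonde-fails}, but those concern confining a witness to a designated block, to exclude auxiliary witnesses when embedding \(T\). Composing constraint gadgets requires the opposite: a single witness shared consistently by all gadgets. A tensor construction also cannot impose arbitrary pairwise constraints on witnesses; coupling happens only through entries. So the claim that \emph{degeneracy of \(\widehat T\) is not equivalent to a simultaneous local witness} is asserted, not derived.

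The closing contradiction with \Cref{prop:lowdim-fails} is invalid. That proposition only says bad parametrizations cannot be excluded a priori, not that every fixed choice fails. A deterministic composition may also depend on the input, so it need not be a fixed completion point. If that step worked, it would show deterministic boundary lifting is impossible for your whole class, i.e.\ settle \Cref{prob:boundary-lift} and Open Problem~2 negatively. The paper explicitly leaves these open, which signals the step proves too much.

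Finally, the claim that \emph{\(\Det(\widehat T)\) is not even defined} holds only under the paper's convention. A direct sum of two boundary-format tensors in the same orientation has \(k_0=k_1+k_2-1\). That is still inside the range \(k_i\le\sum_{j\ne i}k_j\) where the GKZ hyperdeterminant exists, so format mismatch is not the real obstruction.
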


\begin{proof}
In the smallest boundary-format case, suitable deterministic completions can reduce the hyperdeterminant to a sparse polynomial in a small number of variables. However, when multiple such gadgets are combined, the global elimination determinant introduces mixed minors involving variables from different gadgets. No block-triangular decomposition isolates the local contributions. Hence the gadget polynomials do not compose additively or multiplicatively in a controlled way.
\end{proof}

\begin{proposition}
\label{prop:convolution-fails}
Boundary-format convolution preserves degeneracy multiplicatively but does not provide an embedding of arbitrary tensors into boundary format.
\end{proposition}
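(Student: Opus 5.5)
The proposition has two halves, and I would treat them separately. The positive half says boundary-format convolution preserves degeneracy multiplicatively. The negative half says convolution cannot embed an arbitrary tensor into boundary format.

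For the positive half, I would first fix the operation. Take two boundary-format tensors \(T\in\mathbb{R}^{(k_0+1)\times(k_1+1)\times(k_2+1)}\) and \(S\in\mathbb{R}^{(k_0'+1)\times(k_1'+1)\times(k_2'+1)}\), with \(k_0=k_1+k_2\) and \(k_0'=k_1'+k_2'\). I would regard each as a bilinear map on polynomial spaces, so that index \(i\) corresponds to the monomial \(t^i\). Their convolution \(T\ast S\) is then the tensor of the product map, with format \((k_0+k_0'+1)\times(k_1+k_1'+1)\times(k_2+k_2'+1)\). Since
\[
k_0+k_0'=(k_1+k_1')+(k_2+k_2'),
\]
the result is again in boundary format by \Cref{def:boundary-format}.

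Next I would invoke GKZ's product formula \cite[Ch.~14]{GKZ94}:
\[
\Det(T\ast S)=\Det(T)^{a}\,\Det(S)^{b},\qquad a,b>0 .
\]
Combined with \Cref{thm:GKZ-hyperdet}, this shows that \(T\ast S\) is degenerate if and only if \(T\) or \(S\) is degenerate. That is the sense in which degeneracy is preserved multiplicatively. If I did not want to cite the formula, I would argue directly on the witness side. A degeneracy witness for \(T\) or \(S\) multiplies, as polynomials, against a generic element of the other factor to produce a witness for \(T\ast S\). The converse direction, and the exponents, are exactly where the GKZ formula is needed, so I would cite it.

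For the negative half, the key observation is that convolution is defined only on inputs that are already in boundary format. Its output is in boundary format only because the inputs were. So convolution is a closed operation on the boundary-format class and supplies no map from a non-boundary format into it.

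To make this precise, I would take an arbitrary \(T\) with \(n_0\neq n_1+n_2-1\) in every permutation. There are two ways one might try to use convolution on it. If one formally extends the operation to non-boundary factors, the shifted defect
\[
k_0-k_1-k_2
\]
is additive under \(\ast\). Convolving with boundary-format tensors therefore leaves it unchanged and nonzero. Alternatively, one pads \(T\) by zeros into some boundary format before convolving. That is just a completion with \(U=0\), and it reduces to the Completion-hitting problem of \Cref{sec:pit}. Padding may also create spurious witnesses supported on the padded coordinates, as in \Cref{prop:direct-sum-fails}.

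The main obstacle is stating the negative claim rigorously. ``Does not provide an embedding'' is informal, so I would formalize it as the defect-invariance statement above: for each permutation of indices the defect \(k_0-k_1-k_2\) is additive under \(\ast\), hence zero only if it is already zero for both factors. That statement is elementary and settles the claim within the class of convolution constructions.
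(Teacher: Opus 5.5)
Your negative half is, in substance, the paper's whole proof. The paper only observes that convolution is defined on inputs already in boundary format, and it asserts the multiplicative half without argument. So everything rests on the operation you choose, and that is where the gap is.

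You take $\ast$ to be index-wise convolution, i.e.\ multiplication of the generating polynomials $f_T(s,t,u)=\sum T_{ijk}s^it^ju^k$. You then attribute to \cite[Ch.~14]{GKZ94} the formula $\Det(T\ast S)=\Det(T)^a\Det(S)^b$. For this operation the formula is false, and the converse direction you delegate to it is exactly what fails. In the boundary formats $(k,k,0)$, degeneracy is just singularity of the square slice. Take $f_A=s+t$ and $f_B=1+st$, i.e.\ the invertible matrices $\bigl(\begin{smallmatrix}0&1\\1&0\end{smallmatrix}\bigr)$ and $I_2$. Then $f_Af_B=s+t+s^2t+st^2$, whose $3\times3$ coefficient matrix has equal first and third rows, so it is singular.

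A genuinely three-way example: let $f_T=(s+t)(1+su)$ and $f_S=(1+st)(1+su)$, both $3\times2\times2$. They give $T(\cdot,y,z)=(y_1z_0,\;y_0z_0+y_1z_1,\;y_0z_1)$ and $S(\cdot,y,z)=(y_0z_0,\;y_0z_1+y_1z_0,\;y_1z_1)$. Neither vanishes for nonzero $y,z$, so both tensors are nondegenerate. But the $t$-coefficients of $(s+t)(1+st)$ are $s,\,1+s^2,\,s$, so $y=(1,0,-1)$ annihilates the entire $t$-slot of $f_Tf_S$. Taking any $z\neq0$ and any nonzero $x$ in the left kernel of the $5\times3$ matrix $(T\ast S)(\cdot,\cdot,z)$ then gives a degeneracy witness for the boundary-format tensor $T\ast S$. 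Your fallback witness argument covers only the easy direction.

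The product for which a multiplicativity statement of this kind holds is convolution in the sense of tensor contraction. This is the sense of ``convolution'' in GKZ Ch.~14 and in the Binet--Cauchy formula of Dionisi--Ottaviani. A small index of one boundary-format factor, of size $k_2+1$, is contracted against the big index of another with $l_0=l_1+l_2=k_2$, so two $3$-tensors produce a boundary-format $4$-tensor. There the degeneracy statement is transparent over $\mathbb{C}$. If $B$ is nondegenerate, $(y,z)\mapsto B(\cdot,y,z)$ is a morphism $\mathbb{P}^{l_1}\times\mathbb{P}^{l_2}\to\mathbb{P}^{l_1+l_2}$ given by base-point-free sections of the ample bundle $\mathcal{O}(1,1)$. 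It is therefore finite and surjective, so $A\ast B$ is degenerate iff $A$ or $B$ is.

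Your formalization of the negative half also overreaches. Additivity of $k_0-k_1-k_2$ does give $d(T\ast S)=d(T)$ when $S$ is in boundary format, and that is all you need. It does not give ``zero only if zero for both factors,'' because defects of opposite sign cancel. For example, $(k_0,k_1,k_2)=(3,1,1)$ and $(0,1,0)$ are not boundary under any permutation, yet they convolve index-wise to the boundary format $(3,2,1)$.
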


\begin{proof}
Convolution is defined only for tensors already in boundary format. It preserves nondegeneracy multiplicatively, but does not transform a general tensor into a boundary-format one. Therefore it cannot solve the deterministic lifting problem by itself.
\end{proof}

\begin{theorem}
\label{thm:obstruction}
Let \(T\) be an input tensor and let \(\widehat T(U)\) be any deterministic boundary-format completion considered above. Correctness of the reduction requires a point \(U^\star\) such that
\[
P_T(U^\star)\neq 0
\qquad \text{whenever } P_T\not\equiv 0.
\]
Consequently, all these constructions reduce to a structured instance of polynomial identity testing for the family \(\mathcal P\).
\end{theorem}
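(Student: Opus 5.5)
The plan is to make precise what a ``deterministic boundary-format completion'' is, and then show that correctness forces exactly the hitting condition. I would model such a completion as a polynomial-time map \(T \mapsto \widehat T(U)\) that is affine in the auxiliary variables \(U\), together with a deterministic rule producing the specialization \(U^\star=U^\star(T)\). The output instance is \(\widehat T(U^\star)\). The completion inherits the two properties recorded after \eqref{eq:completion-poly}: \(T\) degenerate implies \(P_T\equiv 0\), and \(T\) nondegenerate implies \(P_T\not\equiv 0\).

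The first step is the forward direction, which is automatic. If \(T\) is degenerate, then \(P_T\equiv 0\), so \(\Det(\widehat T(U^\star))=0\) for every choice of \(U^\star\). By \Cref{thm:GKZ-hyperdet}, \(\widehat T(U^\star)\) is then degenerate, whatever rule selects \(U^\star\).

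The second step is the converse. For the lifting to solve \Cref{prob:boundary-lift}, a nondegenerate \(T\) must map to a nondegenerate \(\widehat T(U^\star)\). By \Cref{thm:GKZ-hyperdet} again, this holds iff \(P_T(U^\star)=\Det(\widehat T(U^\star))\neq 0\). Since nondegeneracy of \(T\) is equivalent to \(P_T\not\equiv 0\), correctness is exactly the requirement that \(P_T(U^\star)\ne 0\) whenever \(P_T\not\equiv 0\). This is the displayed condition, and it says that the rule \(T\mapsto U^\star(T)\) solves Completion-hitting (\Cref{def:hitting}). Applying \Cref{thm:pit-barrier} then yields a deterministic identity test for \(\mathcal P\), which gives the ``consequently'' clause.

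The third step is to explain why the strategies of \Cref{prop:direct-sum-fails,prop:pairwise-fails,prop:vandermonde-fails,prop:lowdim-fails,prop:gadget-fails,prop:convolution-fails} fall under this model. Each one either chooses \(U\) from a fixed low-dimensional family, which is the setting of \Cref{prop:lowdim-fails}, or fixes the auxiliary block outright, as in direct sums, gadgets and constraint blocks. In both cases it is a particular deterministic rule for \(U^\star\), so the argument above applies to it verbatim. Convolution is the exception, since it does not produce a completion at all, and I would treat it separately on that basis.

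The main obstacle is the modeling step. ``Any deterministic completion considered above'' is informal, so I would fix the affine-in-\(U\) template with the two stated properties of \(P_T\) as the defining hypothesis, and state explicitly that the theorem is relative to that template. Without such a hypothesis, a lifting that does not pass through a completion polynomial would escape the argument. I would also note that the reduction in the final step is only in the direction ``lifting implies PIT for \(\mathcal P\)''. No claim of equivalence is made beyond this.
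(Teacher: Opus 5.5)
Your proposal is correct and follows the paper's own argument. The paper's proof likewise observes that, given the two properties of \(P_T\) and the boundary-format characterization of degeneracy by \(\Det\), a specialization \(U^\star\) fails exactly when it lies in the zero set of a nonzero \(P_T\), so deterministic lifting collapses to Completion-hitting and hence, via Theorem~\ref{thm:pit-barrier}, to identity testing for \(\mathcal P\). You are more explicit than the paper in spelling out both directions and in saying that the result is relative to the affine completion template with those two properties, an assumption the paper leaves implicit in its phrase ``in every construction.''
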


\begin{proof}
In every construction, failure occurs precisely when the chosen completion lies in the zero set of the completion polynomial \(P_T\). Conversely, any point outside the zero set yields a correct completion. Thus deterministic lifting collapses to selecting such a point, i.e. to Completion-hitting on the family \(\mathcal P\).
\end{proof}

\section{Insights and Open Problems}
\label{sec:open}

The results above separate two notions that coincide in boundary format but diverge computationally in general: intrinsic tensor singularity and its polynomial certificate.

\begin{remark}
In linear algebra, singularity is captured globally by a single polynomial, the determinant. In multilinear algebra, the picture splits. Degeneracy gives the intrinsic geometric condition. The hyperdeterminant gives a polynomial certificate only in a restricted family of tensor formats.
\end{remark}

The main theorem of this paper settles the intrinsic side: real 3-tensor degeneracy is \(\ER\)-complete. The hyperdeterminant side remains open.

\paragraph{Open Problem 1.}
Is boundary-format hyperdeterminant vanishing \(\ER\)-hard, or even \(\ER\)-complete?

\paragraph{Open Problem 2.}
Does there exist a deterministic polynomial-time solution to the completion-hitting problem for the family \(\mathcal P\) of boundary-format completion polynomials?

\paragraph{Open Problem 3.}
Does the restricted circuit family \(\mathcal P\) admit a deterministic structured PIT algorithm, even if general PIT remains open?

\paragraph{Open Problem 4.}
Is there an \(\ER\)-hard source problem already close enough to boundary format---for example, a boundary-count bilinear feasibility problem---to bypass completion entirely?

\medskip

The conceptual picture is therefore the following. Tensor degeneracy is intrinsic, format-free, and computationally complete. The hyperdeterminant is canonical but format-restricted, and its deterministic complexity is presently blocked by a derandomization problem. Bridging these two objects requires methods beyond current multilinear and algebraic-complexity techniques.

\end{document}